\documentclass[preprint,12pt,english]{elsarticle}
\usepackage{amsmath}
\usepackage{amssymb}
\usepackage{lipsum}
\usepackage[sans]{dsfont}
\usepackage{multirow}
\usepackage{afterpage}
\usepackage{amsfonts}
\usepackage{bm}%
\usepackage{babel}
\usepackage{hyperref}
\usepackage{graphicx}
\usepackage{color}
\usepackage{amsthm}
\def\fnum@figure{\figurename\thefigure}
\renewcommand{\figurename}{Fig.}

\newcommand{\sech}{\textrm{sech}}
\newtheorem{theorem}{Theorem}

\newtheorem{lemma}[theorem]{Lemma}
%\nofiles
\journal{Physics Letter A}

\title{Fidelity based purity and coherence for quantum states}
\begin{document}
\author{Indrajith V S$^*$ , R. Muthuganesan$^\ddag$ R. Sankaranarayanan$^*$ }
\address{$^*$ Department of Physics, National Institute of Technology\\ Tiruchirappalli-620015, Tamil Nadu, India.}
\address{$^\ddag$ Centre for Nonlinear Science and Engineering, School of Electrical and Electronics Engineering,  SASTRA Deemed University, Thanjavur - 613401, Tamil Nadu, India.}
\begin{abstract}
Purity and coherence of a quantum state are recognized as useful resources for various information processing tasks. In this article, we propose a fidelity based valid measure of purity and coherence monotone, and establish a relationship between them.
%In this article, we propose a measure of purity based on fidelity which satisfies all the axioms for being a good purity measure. We have also introduced fidelity based coherence monotone, which is related to the purity.
 This formulation of coherence is extended to quantum correlation relative to measurement. We have also studied the role of weak measurement on purity.
  \end{abstract}
\begin{keyword}
Purity, Weak measurement, Coherence, Quantum correlation.
\end{keyword}
\maketitle
%%%%%%%%%%%%%%%%%%%%%%%%%%%%%%%%%%%%%%%%%%%%%%%%%%%%%%%%%%%%%%%%%%%%%%%%%%%%%%%%%%%%%%%%%%%%%%%%%%%%%%%%%%%%
\section{Introduction}
A conventional picture of information is related to entropy, which is a subjective approach, in the sense that something that is known to the sender is unknown to the receiver, where the message is considered as information. Whereas in the objective approach of quantum information, the state itself is considered as information. In this approach, since pure state $ \rho =  \lvert \psi \rangle \langle \psi \rvert $ has zero entropy, purity can be considered to quantify the information represented by the state \cite{purity1,purity2}. In this context, a state which is not maximally mixed and is restricted to the preparation of unitary operations is a resource state of purity, and any deviation from the maximally mixed state quantifies the purity in the state. It turns out that the states that are not maximally mixed can be created using resources in pure states. Some of the earlier works on this approach established a relationship between resource theory of purity and thermodynamics, also with quantum correlations, make any analysis on purity worthy \cite{Thermo_purity}.  We can also connect the purity of a state with the Bruckner-Zeilinger information measure \cite {bruk_zlngr1, bruk_zlngr2}.

On the other hand, the resource theory of coherence has also been studied widely \cite{coherce}. Being a basis dependant quantity and arising due to the superposition principle, coherence plays a key role in quantum metrology \cite{Q_metro1, Q_metro2}, quantum biology \cite{q_bio}, quantum thermodynamics \cite{q_thermo}, etc. A handful of coherence measures have been proposed based on trace distance \cite{trace_coh}, entropy \cite{rel_coh}, Hellinger distance \cite{hel_coh} and skew information \cite{skew_coh1,skew_coh2}. Further, coherence is defined relative to quantum measurements wherein the post-measured state can be considered as an incoherent state. Such coherence relative to measurement is exploited as correlated coherence or quantum correlation.

%In the resource theory of purity, we consider a class of state called maximally mixed state as reourcesles state. Then any state that are not maximally mixed can be considered as a resources state.

In this article, we show that purity of a state quantified by logarithm of fidelity with maximally mixed state of dimension $d$ is a good measure. This purity measure is calculated for a general bipartite state, ranging between 0 (for maximally mixed state) and 1 (for pure state), which can give an insight into mixedness in a state. The notion of fidelity is also extended to define coherence monotone by optimizing the closeness between a quantum state and a maximally incoherent state. By establishing a one-to-one relationship between purity and coherence monotone, purity is interpreted as maximal coherence.   

As an extended application, we use this purity measure to classify the quantum state along with coherence. Finally, we discuss the role of weak measurements on the purity measure as well. %The latter section took the post-measured state as a mixed state and defined a purity measure relative to measurement. Apart from considering full von Neumann measurement on the system, we have also considered a weak measurement on the system, which defines the purity measure relative to weak measurement.
%%%%%%%%%%%%%%%%%%%%%%%%%%%%%%%%%%%%%%%%%%%%%%%%%%%%%%%%%%%%%%%%%%%%%%%%%%%%%%%%%%%%%%%%%%%%%%%%%%%%%%%%%%%%%%%%%%%%%%%%%%%%%%%%%%%%%%%%%%%%%%%%%%%%%%%%%%%%%%%%%%%%%%%%%%%%%%%%%%%%%%%%%%%%%%%%%%%%%%%%%%%%%%%%%%%%%%%%%%
\section{Purity as a resource}
%\label{purity}
When a state of a quantum system is precisely known, it is an extreme point of a convex set such that it cannot be represented as a convex combination of other states. Such a state is called as a pure state. On the hand, a mixed state permits convex combination and so is generic.
Any state has some purity if it is reluctant to the preparation of completely mixed states. Hence purity of a state can be considered as a resource, and any operation that does not increase purity is a free operation. Here we define some of the free operations as listed below \cite{purity2}.
\begin{itemize}
  \item[$\diamond$] Mixture of unitary operations
  \begin{equation}
    \Xi_U(\rho) = \sum_ip_iU_i\rho U_i^{\dagger}
  \end{equation}
  for all unitary operations $U_i$, with $p_i \geq 0,\sum_ip_i = 1.$
  \item[$\diamond$] Noisy operations
  \begin{equation}
    \Xi_{O}(\rho) = \text{tr}_E[U(\rho\otimes\mathbb{I}/d)U^{\dagger}] 
  \end{equation}
  where $\mathbb{I}$ is the identity matrix.
  \item[$\diamond$] Unital operation - which preserves maximally  mixed state
  \begin{equation}
    \Xi_\mathcal{I}\left(\frac{\mathbb{I}}{d}\right) = \frac{\mathbb{I}}{d}
    \end{equation}
\end{itemize}
 where $d$ is the dimension of the system. These operations preserve the dimension of Hilbert space $\mathcal{H}$, having a hierarchy \cite{purity}
 \begin{equation}
   \{ \Xi_U(\rho)\} \subset \{\Xi_{O}(\rho)\}   \subset   \left\{\Xi_\mathcal{I}\left(\frac{\mathbb{I}}{d}\right)\right\}.
 \end{equation} 
 For any state $\rho$, purity monotone $\mathcal{P}(\rho)$ satisfies the following axioms \cite{purity1,purity2,purity}.
 \begin{itemize}
\item {\textit{Non-negative}:}
$\mathcal{P}(\rho) \geq 0$, with equality holds for maximally mixed state $\mathbb{I}/d$.
 \item{\textit{Monotonicity:}}
 $\mathcal{P}(\rho)$ does not increase under unital operations i.e., $\mathcal{P}(\Xi_{\mathcal{I}}(\rho)) \leq \mathcal{P}(\rho) $, for any unital operation $\Xi_\mathcal{I}$.\\
\item{\textit{Additivity:}}
 For any two states $\rho$ and $\sigma$, the purity of the combined state is 
 \\ $\mathcal{P}(\rho \otimes \sigma) = \mathcal{P}(\rho) + \mathcal{P}(\sigma)$\\
\item{\textit{Normalization:}}
 For any pure state $\rho =\lvert \psi\rangle\langle \psi \rvert $ of $d$ dimension,
 $\mathcal{P}(\rho) = \log_2 d$.
  \end{itemize}
 While the first two axioms are enough to show that $ \mathcal{P}(\rho)$ can be considered as a purity monotone, additivity and normalization make the quantity a purity measure. It should be noted here that the linear purity $ \text{tr}(\rho^2) $ is a purity monotone. Similarly, quantities based on distance measure and entropic measure have also been studied as purity monotones \cite{purity}. Distance measure based purity monotone is defined as
 \begin{equation*}
   \mathcal{P}_\mathcal{D}(\rho) = \mathcal{D}(\rho,\mathbb{I}/d)
 \end{equation*}
where $ \mathcal{D}(\rho,\sigma)$ represents any distance measure between $\rho$ and $\sigma$. Here, the maximally mixed sate is unique such that optimization is not necessary. Considering Hilbert-Schmidt measure as a valid distance measure, distance-based purity monotone for an arbitrary quantum state $\rho$ is written as
\begin{equation}
  \mathcal{P}_H(\rho) = \| \rho - \mathbb{I}/{d} \|^2 \label{h_putity_def}
\end{equation}
where $\|\mathcal{A} \| = \sqrt{\text{tr}(\mathcal{A}^\dagger\mathcal{A})}$ is the Hilbert-Schmidt norm of an operator $ \mathcal{A}$. With this we have 
\begin{align}
    \mathcal{P}_H(\rho) &= \text{tr}\rho^2 - \frac{1}{d}. \label{h_putity} 
  \end{align}
This purity monotone takes the form of Brukner-Zeilinger information measure $\mathfrak{I}(\rho)$, which is a measure of the information content of the system \cite{bruk_zlngr1,bruk_zlngr2}. For a maximally mixed state, the information is zero and for a pure state the information is $ {(d-1)}/ {d}$.
%%%%%%%%%%%%%%%%%%%%%%%%%%%%%%%%%%%%%%%%%%%%%%%%%%%%%%%%%%%%%%%%%%%%%%%%%%%%%%%%%%%%%%%%%%%%%%%%%%%%%%%%%%%%%%%%%%%%%%%%%%%%%%%%%%%%%%%%%%%%%%%%%%%%%%%%%%%%%%%%%%%%%%%%%%%%%%%%%%%%%%%%%%%%%%%%%%%%%%%%%%%%%%%%%%%%%%%%%%
\section{Purity based on fidelity} 
 Fidelity is defined as a measure of closeness between two arbitrary quantum states $\rho$ and $\sigma$. One definition of fidelity is as given below \cite{jozsa}
  \begin{equation}
  \mathcal{F}_u(\rho,\sigma) = \bigg(\text{tr}\sqrt{\sqrt{\sigma}\rho\sqrt{\sigma}}\bigg)^2. \label{ulman}
   \end{equation}
Since the notion of fidelity is taken to many quantum information processing tasks such as teleportation, dense coding, quantum state tomography, quantum phase transition etc., various forms of fidelity have been proposed. Here we use one alternate form of fidelity \cite{fid2}
  \begin{equation}
    \mathcal{F}(\rho,\sigma) = \frac{\big(\text{tr}(\rho~\sigma)\big)^2}{\text{tr}(\rho^2)\text{tr}(\sigma^2)},\label{fidlty}
  \end{equation} 
which is easier to compute than the earlier one. The fidelity $\mathcal{F}(\rho,\sigma)$ defined above has the following properties.
\begin{enumerate}
  \item[$(\mathcal{F}1)$] $0 \leq \mathcal{F}(\rho,\sigma) \leq 1$, maximum is attained when $\rho = \sigma$.
  \item[$(\mathcal{F}2)$] $\mathcal{F}(\rho,\sigma) = \mathcal{F}(\sigma, \rho) $.
  \item [$(\mathcal{F}3)$]Fidelity is invariant under unitary transformation $U$ such that $\mathcal{F}(\rho,\sigma) = \mathcal{F}(U \rho U^{\dagger},U \sigma U^{\dagger})$.
  \item [$(\mathcal{F}4)$]For a pure state defined by $\sigma = | \psi \rangle \langle \psi |  $, fidelity takes the form $\mathcal{F}(\rho, | \psi \rangle \langle \psi |) = \langle \psi | \rho | \psi \rangle / \text{tr}(\rho^2) $.
  \item[$(\mathcal{F}5)$] $\mathcal{F}(\rho_1 \otimes \rho_2, \sigma_1 \otimes \sigma_2) = \mathcal{F}(\rho_1, \sigma_1)\cdot \mathcal{F}(\rho_2, \sigma_2)$.
 % \item[$(\mathcal{F}6)$] 
 % \begin{equation}
  %   \mathcal{F}(\sum_k p_k \rho_k, \sigma) \geq  \sum_k p_k \mathcal{F}(\rho_k, \sigma) \label{f1}
 % \end{equation}
  \item [$(\mathcal{F}6)$] For any orthogonal projectors \{$\Pi_i\} = \{\lvert i\rangle \langle i\rvert\}$, $\mathcal{F}\big(\sum_i\rho_i,\sum_i\sigma_i\big) = \sum_i \mathcal{F}(\rho_i,\sigma_i)$, where $\rho_i = \Pi_i \rho \Pi_i$ and $\sigma_i = \Pi_i \sigma \Pi_i$.
 \end{enumerate}
 
 Let us consider $ \mathcal{F}(\rho,\mathbb{I}/{d})$ as fidelity between any state $\rho$ amd maximally mixed state $\mathbb{I}/d$. It is clear that ${1}/{d} \leq \mathcal{F}(\rho,\mathbb{I}/{d}) \leq 1$ with minimum and maximum value correspond to pure and maximally mixed states respectively. With this we define a purity measure as
  \begin{equation}
    \mathcal{P}_\mathcal{F}(\rho) = -\log_d \mathcal{F}(\rho,\mathbb{I}/{d}) \label{purity}.
  \end{equation}
 %When the state is pure, fidelity is $ {1}/{d}$ and for $\rho$ being a maximally mixed state $ {\mathbb{I}}/{d}$, the fidelity achieves its maximum (unity). All other states of arbitrary mixedness lies between them as ${1}/{d} \leq \mathcal{F}(\rho,\mathbb{I}/{d}) \leq 1$. 
 Taking logarithm normalizes the fidelity for being a purity measure such that it is maximum for a pure state and 0 for a maximally mixed state. Hence the measure of purity based on logarithm of fidelity quantifies the mixedness of the system. Since the base of the $\log$ is taken to be the dimension of Hilbert space, this quantity is independent of $d$. Also, this measure can be used for any qudit system. The above purity measure possesses the following properties.
\begin{enumerate}
\item[$(\mathcal{P}1)$] $\mathcal{P}_\mathcal{F}(\rho)$ is a monotonic function, which increases from a maximally mixed state to a pure state.
 \item[$(\mathcal{P}2)$]  $0 \leq \mathcal{P}_\mathcal{F}(\rho) \leq 1$ and is independant of dimension $d$.
\item[$(\mathcal{P}3)$]  $\mathcal{P}_\mathcal{F}(\rho) $ does not increase under unital operation.
\\$\mathcal{P}_\mathcal{F}[\Xi_\mathcal{I}(\rho)] \leq \mathcal{P}_\mathcal{F}(\rho) $, with equality holds for maximally mixed state.
\item[$(\mathcal{P}4)$]  For any two states $\rho$ and $\sigma$ each of dimension $d$, 
$\mathcal{P}_\mathcal{F}(\rho \otimes \sigma) = \mathcal{P}_\mathcal{F}(\rho) +  \mathcal{P}_\mathcal{F}(\sigma) $
\begin{proof}
Taking in account of the property of trace,
\begin{align*}
  \mathcal{P}_\mathcal{F}(\rho \otimes \sigma) =& -\log_{d} \mathcal{F}(\rho \otimes \sigma, \mathbb{I}/d \otimes \mathbb{I}/d)\\
  =& -\log_d \frac{\big(\text{tr}(\rho \otimes \sigma~\frac{\mathbb{I}}{d}\otimes \frac{\mathbb{I}}{d})\big)^2}{\text{tr} (\rho \otimes \sigma)^2~\text{tr}(\frac{\mathbb{I}}{d}\otimes \frac{\mathbb{I}}{d})^2}\\
  =& -\log_{d}\frac{(\text{tr}\rho \frac{\mathbb{I}}{d})^2}{\text{tr} \rho^2 \text{tr} (\frac{\mathbb{I}}{d})^2}\frac{(\text{tr}\sigma \frac{\mathbb{I}}{d})^2}{\text{tr} \sigma^2 \text{tr} (\frac{\mathbb{I}}{d})^2}\\
  =& -\log_{d} \mathcal{F}(\rho,\mathbb{I}/d) - \log_{d} \mathcal{F}(\sigma,\mathbb{I}/d)\\
  =& \,\mathcal{P}_\mathcal{F}(\rho) +   \mathcal{P}_\mathcal{F}(\sigma). 
\end{align*}
\end{proof}
%\item [$(\mathcal{P}5)$] Purity based on logarithamic fidelity $\mathcal{P}_\mathcal{F}(\rho)$ for a pure state $\rho$ is independent of dimension.
\end{enumerate}
With all these properties, $ \mathcal{P}_\mathcal{F}(\rho)$ can be considered as a purity measure. The measure of purity in eq. (\ref{purity}) can also be written as 
\begin{equation}
  \mathcal{P}_\mathcal{F}(\rho) = \log_d \big(d\,\text{tr}\rho^2\big)\label{log_purity} .
\end{equation}
This relation shows that the proposed fidelity based purity is one-to-one with linear purity $ \text{tr}\rho^2$.
%%%%%%%%%%%
\begin{lemma}
Consider a general bipartite state of the form $\rho = \sum_{i,j} \gamma_{ij} X_i \otimes Y_j$ where \{$X_i : i = 1, 2, \cdots, m^2\}$ and \{$Y_j : j = 1, 2, \cdots, n^2\}$ are set of self-adjoint orthonormal operators on Hilbert spaces $\mathcal{H}^a ~\text{and} ~\mathcal{H}^b$ respectively with $\text{tr}(X_i^{\dagger}X_j) = \text{tr}(Y_i^{\dagger}Y_j)= \delta_{ij}$ and $X_1 = \mathbb{I}/\sqrt{m}$, $Y_1 = \mathbb{I}/\sqrt{n}$. Then the purity is given by 
\begin{equation*}
\mathcal{P}_{\mathcal{F}} (\rho)  = \log_d (d\,\| \Gamma \|^2)
\end{equation*}
where $\Gamma = \gamma_{ij} = \text{tr}(\rho X_i \otimes Y_j)$.
\end{lemma}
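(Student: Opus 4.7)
The plan is to reduce the lemma to an application of the closed form \eqref{log_purity} and then compute $\text{tr}\rho^2$ directly in the given operator basis. Since $d = mn$ is the dimension of $\mathcal{H}^a\otimes\mathcal{H}^b$ and equation \eqref{log_purity} already tells us that
\begin{equation*}
\mathcal{P}_\mathcal{F}(\rho) = \log_d\bigl(d\,\text{tr}\rho^2\bigr),
\end{equation*}
it suffices to prove the single identity $\text{tr}\rho^2 = \|\Gamma\|^2 = \sum_{i,j}|\gamma_{ij}|^2$.

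First, I would verify in one line that the coefficients admit the stated expression $\gamma_{ij}=\text{tr}(\rho\, X_i\otimes Y_j)$: substitute the expansion of $\rho$, distribute the trace across the tensor product as $\text{tr}(A\otimes B)=\text{tr}(A)\text{tr}(B)$, and apply the orthonormality conditions $\text{tr}(X_i^{\dagger}X_k)=\delta_{ik}$ and $\text{tr}(Y_j^{\dagger}Y_l)=\delta_{jl}$ (noting that self-adjointness makes daggers harmless). This also shows that the $\gamma_{ij}$ are real, since $\rho$ and the basis operators are Hermitian.

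Next, the main computation. I would write
\begin{equation*}
\text{tr}\rho^2 = \sum_{i,j,k,l}\gamma_{ij}\gamma_{kl}\,\text{tr}(X_iX_k)\,\text{tr}(Y_jY_l),
\end{equation*}
again using $\text{tr}(A\otimes B)=\text{tr}(A)\text{tr}(B)$ on each product term, and then collapse the four sums to two via the two Kronecker deltas coming from orthonormality, leaving $\sum_{ij}\gamma_{ij}^2=\|\Gamma\|^2$. Plugging this into \eqref{log_purity} yields $\mathcal{P}_\mathcal{F}(\rho)=\log_d(d\,\|\Gamma\|^2)$, as required.

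There is no real obstacle: the entire argument is a direct consequence of the bilinearity of the trace on tensor products together with the orthonormality of the two operator bases. The only point that deserves a sentence of care is that the conditions $X_1=\mathbb{I}/\sqrt{m}$ and $Y_1=\mathbb{I}/\sqrt{n}$ are not actually used in the identity $\text{tr}\rho^2=\|\Gamma\|^2$; they are present in the hypothesis because they ensure that the trace condition $\text{tr}\rho=1$ is encoded cleanly via $\gamma_{11}=1/\sqrt{mn}$, which the reader may want to check for consistency but which plays no role in the purity calculation itself.
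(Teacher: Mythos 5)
Your proposal is correct and follows essentially the same route as the paper: both reduce the lemma to the identity $\text{tr}\rho^2 = \|\Gamma\|^2$, obtained from the orthonormality of the operator bases, and then insert this into $\mathcal{P}_\mathcal{F}(\rho)=\log_d(d\,\text{tr}\rho^2)$. Your write-up is in fact slightly cleaner, since you make the Kronecker deltas explicit where the paper's displayed computation elides them, and you correctly observe that the normalization conditions $X_1=\mathbb{I}/\sqrt{m}$, $Y_1=\mathbb{I}/\sqrt{n}$ are not needed for the trace identity itself.
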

\begin{proof}
The fidelity between bipartite mixed state and maximally mixed state is calculated as 
\begin{align*}
  \mathcal{F}(\rho,\mathbb{I}/d) &= \frac{(1/d^2)(\text{tr}\rho)^2}{(1/d)\text{tr}\rho^2}\\
                                 &= \frac{1}{d\,\text{tr}\rho^2}. 
\end{align*}
Since
\begin{equation*}
  \text{tr}\rho^2 = \sum_{ii'jj'} \gamma_{ij}\gamma_{i'j'} X_iX_{i'} \otimes Y_j Y_{j'} = \sum_{ii'jj'} \gamma_{ij}\gamma_{i'j'} = \text{tr}(\Gamma \Gamma^t) = \| \Gamma \|^2
\end{equation*} 
%\begin{align*}
%\text{tr}(\rho^2) &= \sum_{ii'jj'} \gamma_{ij}\gamma_{i'j'} X_iX_{i'} \otimes Y_j Y_{j'} \\
%                  &= \sum_{ii'jj'} \gamma_{ij}\gamma_{i'j'}\\
%                  &= \text{tr}(\Gamma \Gamma^t)\\
%                  &= \| \Gamma \|^2.
%\end{align*}
 we have
\begin{equation*}
  \mathcal{P}_{\mathcal{F}}(\rho) = \log_d (d\| \Gamma\|^2).
\end{equation*}
\end{proof}
 %%%%%%%%%%%%%%%%%%%%%%%%%%%%%%%%%%%%%%%%%%%%%%%%%%%%%%%%%%%%%%%%%%%%%%%%%%%%%%%%%%%%%%%%%%%%%%%%%%%%%%%%%%%%
%%%%%%%%%%%%%%%%%%%%%%%%%%%%%%%%%%%%%%%%%%%%%%%%%%%%%%%%%%%%%%%%%%%%%%%%%%%%%%%%%%%%%%%%%%%%%%%%%%%%%%%%%%%%%
%\section{Illustration}
Having computed fidelity purity for general bipartite state, here we calculate the purity for well-known families of quantum states.
\begin{figure}[!ht]
 \includegraphics[width=0.5\linewidth]{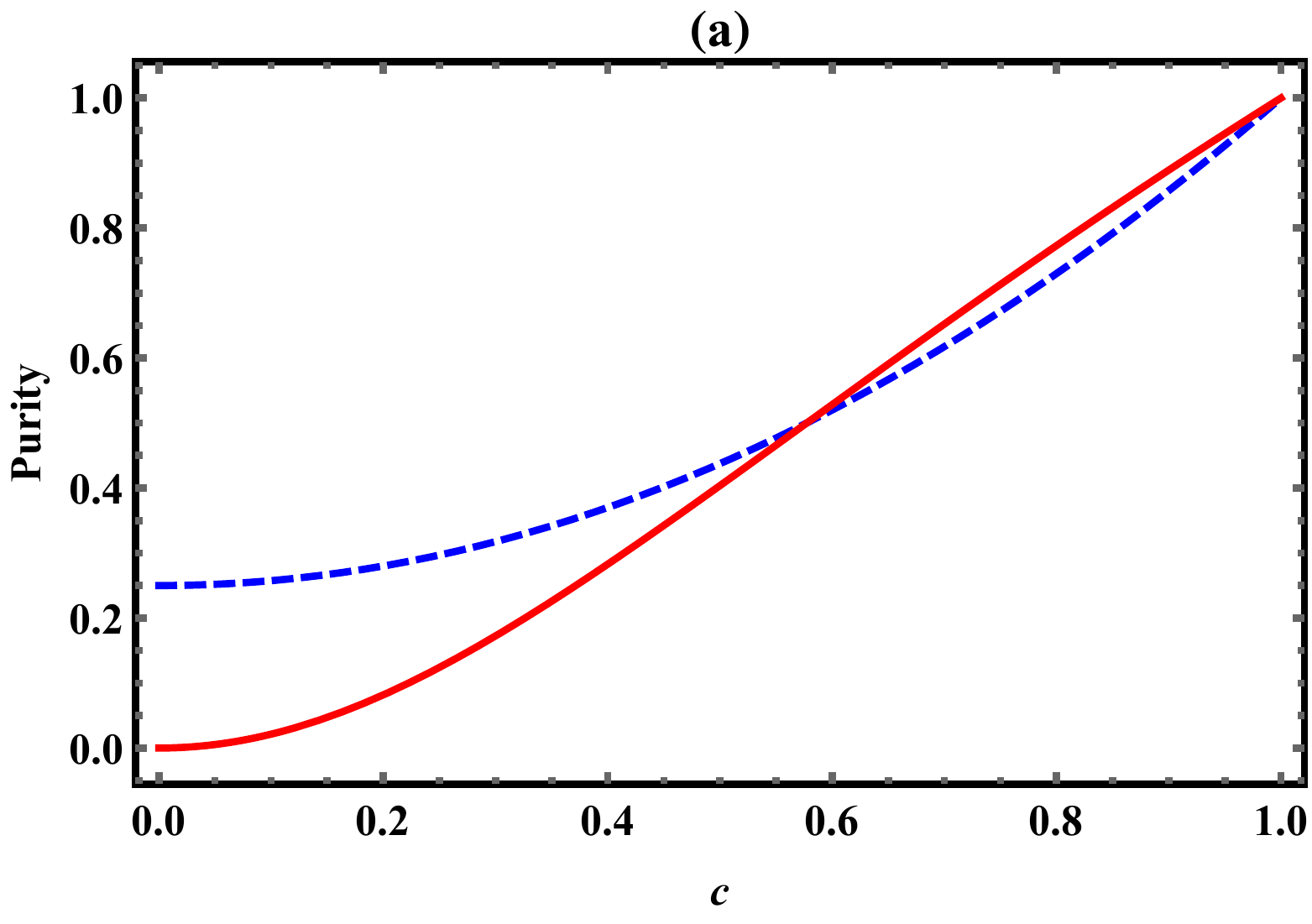}
 \includegraphics[width=0.5\linewidth]{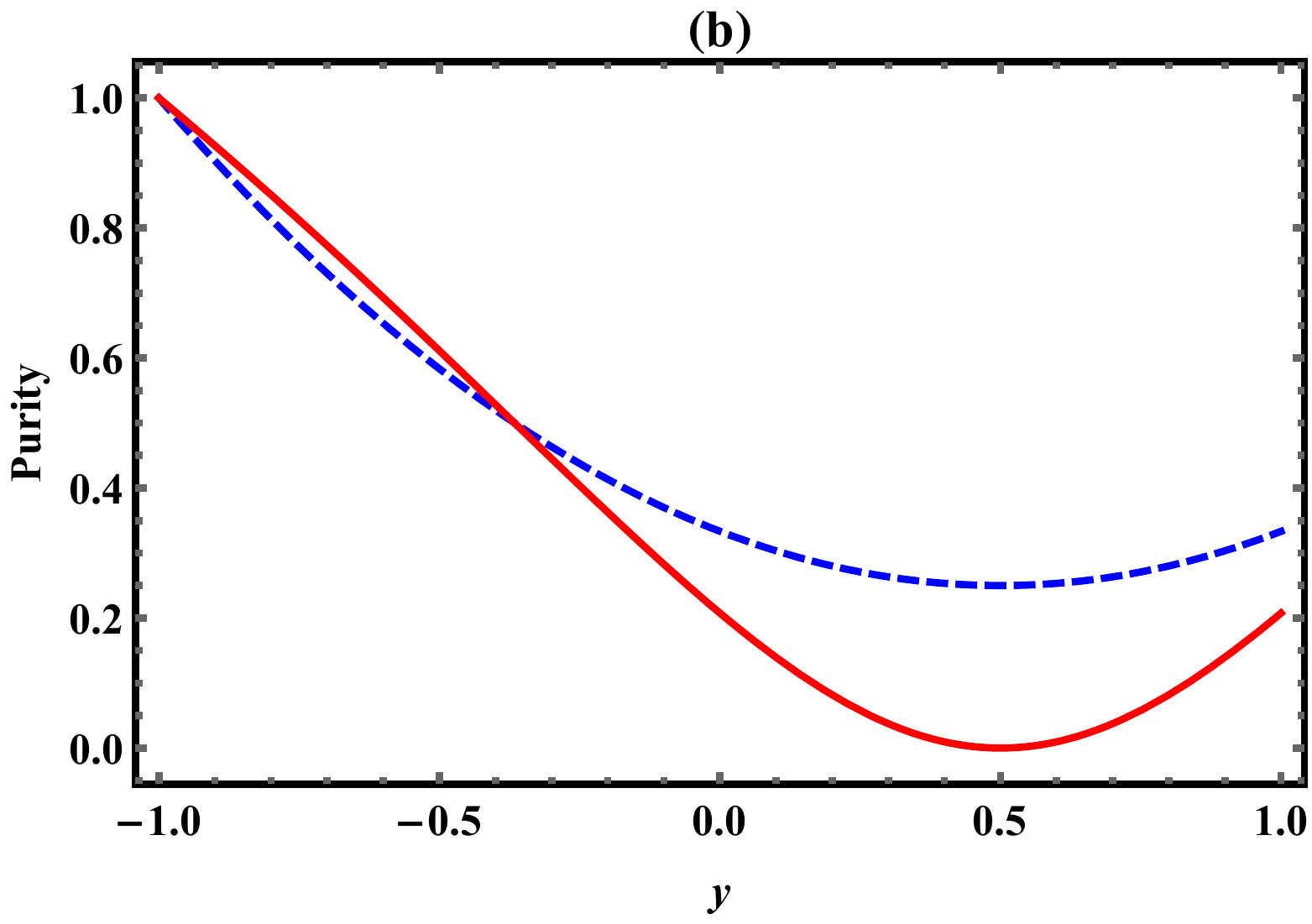}\\
\caption{(color online) Purity quantified  by fidelity  (solid), linear purity (dashed) for (a) Bell states with $ c_1 = c_2 = c_3 = -c$ (b) Werner states.}
\label{pure} 
\end{figure}
%In Fig. \ref{pure}, we plot purity as a function of the probability $p$. The 3D plot of purity represented by Fig. \ref{pure}(a) shows a smooth behavior with respect to the probabilities. In Fig. \ref{pure}(b) we compare our proposed purity measure with linear purity. The purity measure quantified by logarithm of fidelity, shows good agreement with the linear purity and aslo more sensitive. The minimum purity is observed when both the composite pure states are equally probable.
%%%%%%%%%%%%%%%%%
\subsection*{Bell states}
The Bloch vector representation of Bell diagonal states is given as 
\begin{equation}
  \rho_{BD} = \frac{1}{4}\Big(\mathbb{I}\otimes \mathbb{I} + \sum^{3}_{i = 1} c_i(\sigma_i \otimes \sigma_i)\Big) \label{bell_diag}
\end{equation}
where $\textbf{c} = (c_1, c_2, c_3) $ is the correlation coefficients with $ -1 \leq c_i \leq 1$. For these states the linear purity is given 
\begin{equation}
   \text{tr}(\rho_{BD}^2) =\frac{1}{4}(1+c_1^2 + c_2^2 + c_3^2).
\end{equation}
%\begin{figure}[!ht]
 %\centering\includegraphics[width=0.7\linewidth]{Bell_purity}
%\caption{(color online) Linear purity (Solid) and Purity quantified by fidelity (Dotted) for Bell diagonal states with $ c_1, c_2, c_3 = -c.$}
%\label{bell} 
%\end{figure}
%%%%%%%%%%%%%%%%%%%%%%5
\subsection*{Werner States}
Werner states with $d\times d$ dimension can be represented as,
\begin{equation}
  \rho_{W} = \frac{d-y}{d^3-d}\mathbb{I} + \frac{yd-1}{d^3-d}\sum_{\alpha \beta}\lvert \alpha \rangle \langle\beta \rvert  \otimes\lvert \beta \rangle \langle \alpha  \rvert
\end{equation}
where $\sum_{\alpha \beta}\lvert \alpha \rangle \langle\beta \rvert  \otimes\lvert \beta \rangle \langle \alpha  \rvert$ is flip operator with $y \in [-1,1]$. Linear purity for the above state is calculated as
\begin{equation}
  \text{tr}(\rho_W^2)= \frac{1}{9}(4y^2 - 2y + 1).
\end{equation}
%\begin{figure}[!ht]
 %\centering\includegraphics[width=0.7\linewidth]{Werner_purity}
%\caption{(color online) Linear purity (Solid) and Purity quantified by fidelity (Dotted) for Werner states.}
%\label{werner} 
%\end{figure}
%%%%%%%%%%%%%%%%%%%%%%%%%%%
%\subsection*{Isotropic States}
%An $n\times n$ dimensional isotropic state is defined as 
%\begin{equation}
%  \rho^{iso} = \frac{1-y}{n^2-1} \mathbb{I} +\frac{1}{\sqrt{n}} \frac{n^2~y - 1}{n^2 - 1}\sum_i\lvert i\rangle \langle i \rvert. 
%\end{equation}
%The linear purity for the above state is given by $ \gamma = \frac{1}{18}(6y^2-y+6)$.
%\begin{figure}[!ht]
% \centering\includegraphics[width=0.7\linewidth]{Isotropic_purity}
%\caption{(color online) Linear purity (Solid) and Purity quantified by fidelity (Dotted) for Isotropic states.}
%\label{iso} 
%\end{figure}

%In Fig.\ref{bell}, we plot based on purity based on fidelity for Bell diagonal states as a function of $c$. This purity is compared with linear purity. It is observed that the purity quantified by fidelity captures the mixedness of a system better than linear purity. This quantity is much sensitive than linear purity. These results are consistent with other states such as isotropic states and Werner states in Fig.\ref{iso} and Fig.\ref{werner} respectively.
In Fig.\ref{pure}, we plot purity quantified by fidelity for Bell states and Werner states. We observe that the proposed purity is more sensitive with state parameter. Unlike linear purity which does not vanish, fidelity purity is zero for a maximally mixed state.
% quantity. The linear purity vanishes only for traceless states, whereas the purity quantified by fidelity reaches zero for a maximally mixed state, which gives more information about the state.
%%%%%%%%%%%%%%%%%%%%%%%%%%%%%%%%%%%%%%%%%%%%%%%%%%%%%%%%%%%%%%%%%%%%%%%%%%%%%%%%%%%%%%%%%%%%%%%%%%%%%%%%%%%%%%%%%%%%%%%%%%%%%%%%%%%%%%%%%%%%%%%%%%%%%%%%%%%%%%%%%%%%%%%%%%%%%%%%%%%%%%%%%%%%%%%%%%%%%%%%%%%%%%%%%%%%%%%%%%
 \subsection{Classification of quantum states}
  An arbitrary $d$ dimensional quantum  state can be represented as 
  \begin{equation}
    \rho = \frac{\mathbb{I}}{d} + \frac{1}{2}\sum^{d^2-1}_{i=1} x_i X_i
  \end{equation}
where $x_i = \text{tr}(\rho\,X_i)$ with $X_i$ being the generator of $SU(d)$. These generators are hermitian and  satisfiy the conditions $\text{tr}(X_i) = 0 $, $ \text{tr}(X_iX_j) = 2 \,\delta_{ij}$. Then the purity of the above state is given by 
  \begin{align}\nonumber
    \mathcal{P}_\mathcal{F}(\rho) &= -\log_d (\text{tr}\rho^2)\\ 
                                  &= \log_d \left(1+\frac{d}{2}\sum^{d^2-1}_ix^2_i\right).
  \end{align}
 One can define $l_1$ coherence as  $C_{l_1}(\rho) = \sum_{ i \neq j}\lvert \rho_{ij} \rvert$, such that $ 0 \leq C_{l_1} \leq (d-1)$ \cite{coherce,coh2}. This quantity is computed for the above state as
  \begin{equation}
    C_{l_1}(\rho) = \sum^{k}_{ i = 1}\sqrt{x^2_i + x^2_{i+k}}
  \end{equation}
  where $ k = (d^2 -d)/2$. While purity gives the inference about the diagonal elements, coherence gives information about the off-diagonal elements. In order to classify a given state $\rho$, it is useful to define
  \begin{equation*}
   \tau(\rho) = \frac{1}{2}\left(\mathcal{P}_{\mathcal{F}}(\rho) + \frac{1}{d-1}   C_{l_1}(\rho)\right)
  \end{equation*} 
  such that $ 0 \leq \tau(\rho) \leq 1$. Here $ \tau(\rho) = 0$ 
corresponds to maximally mixed (impure) incoherent states and $ \tau(\rho) = 1$ refers to pure coherent state.
  %\begin{equation}
 %  \tau = \frac{1}{2}\left(  \log_d \left(1+\frac{d}{2}\sum^{d^2-1}_ix^2_i\right) + \frac{1}{d-1}\sum^{k}_{ i = 1}\sqrt{x^2_i + x^2_{i+k}} \right)\leq 1.
%  \end{equation}
 % Here  $ 0 \leq \tau \leq 1$, where the equality corresponds to maximally mixed state and pure maximally entangled state respectivly. 
%%%%%%%%%%%%%%%%%%%%%%%%%%%%%%%%%%%%%%%%%%%%%%%%%%%%%%%%%%%%%%%%%%%%%%%%%%%%%%%%%%%%%%%%%%%%%%%%%%%%%%%%%%%%
%%%%%%%%%%%%%%%%%%%%%%%%%%%%%%%%%%%%%%%%%%%%%%%%%%%%%%%%%%%%%%%%%%%%%%%%%%%%%%%%%%%%%%%%%%%%%%%%%%%%%%%%%%%%
\section{Coherence based on Fidelity}
Coherence is an important quantity in quantum information processing tasks. A state is considered as incoherent (with respect to a basis), if it is diagonal and is denoted as $ \delta = \sum^d_i \delta_i \lvert i \rangle \langle i \rvert $ with $\{\lvert i \rangle\}$  being a set of orthonormal basis for dimension $d$. Using the fidelity defined in  eq.(\ref{fidlty}), here we define a coherence monotone as
\begin{eqnarray}
  C_{\mathcal{F}}(\rho) =& ^{\text{min}}_{\delta \in \mathcal{I}}\big(1-\mathcal{F}(\rho,\delta)\big)\nonumber \\
                        =& 1- \,^{\text{max}}_{\delta \in \mathcal{I}}\mathcal{F}(\rho,\delta)\label{cohrce_fdlty}
\end{eqnarray}
where $\mathcal{I} = \{\delta\}$ is a set of incoherent states. For being a bonafied coherence monotone, $ \mathcal{C}_\mathcal{F}(\rho)$ must satisfy the following properties.
\begin{enumerate}

  \item[$(\mathcal{C}1)$] $C_{\mathcal{F}}(\rho) \geq 0$, with equality holds for $\rho \in \mathcal{I}$ 
  \begin{proof}
  From the properties of the fidelity defined above, $0 \leq \mathcal{F}(\rho,\delta) \leq 1$. $\mathcal{F}(\rho,\delta) = 1$ if and only if $\rho = \delta \in \mathcal{I}$ 
  \end{proof}
  \item[$(\mathcal{C}2)$] \textit{Weak monotonicity}: Monotoncity under incoherent complete positive trace preserving (ICPTP) map $ \Phi(\rho)$. That is $C_{\mathcal{F}}(\rho) \geq C_{\mathcal{F}}[\Phi(\rho)]$.
  \begin{proof}
  An incoherent complete positive trace preserving operation is defined as $\Phi(\rho) = \sum_k A_k \rho A^{\dagger}_k$ with Kraus operators $ A_k$ such that \\$\sum_k A^{\dagger}_k A_k = \mathbb{I}$, where $A_k \mathcal{I}A^{\dagger}_k \subseteq \mathcal{I}$. The CPTP map is expressed in terms of unitary operator $U$ as 
 \begin{equation}
   U(\rho  \otimes| k \rangle \langle k | ) U^{\dagger}
 \end{equation} 
where the environment is written in the basis $ |k \rangle $.
 \begin{align*}
   \mathcal{F}[\Phi(\rho),\Phi(\delta)] =&\, \mathcal{F}[\text{tr}_E\big(U(\rho\otimes| k\rangle \langle k | )U^{\dagger}\big), \text{tr}_E\big(U(\delta \otimes| k\rangle  \langle k | )U^{\dagger}\big) ] \\
      \geq&\, \mathcal{F}(\rho \otimes| k\rangle \langle k|,\delta \otimes | k\rangle \langle k| ).
  \end{align*}
  While writing the last inequlity, we employ the fact that $\mathcal{F}(\rho_k, \sigma_k) \geq \mathcal{F}(\rho, \sigma)$ \cite{marg_state}, where $\rho_k$ and $ \sigma_k$ are the marginal states of the states $\rho$ and $\sigma$ respectively. Then by the property of fidelity ($ \mathcal{F}5$)
  \begin{equation}
    \mathcal{F}[\Phi(\rho),\Phi(\delta)] \geq \mathcal{F}(\rho,\delta)
  \end{equation}
  we complete the proof.
 % which inturn $C_{\mathcal{F}}(\rho)  \geq  C_{\mathcal{F}}[\Phi(\rho)]$.
  \end{proof}
 % \item[$(\mathcal{C}3)$] \textit{Convexity} $i.e. C_{\mathcal{F}}\big(\sum_kp_k \rho_k\big) \leq \sum_k p_k C(\rho_k)$ for any states $\{\rho\}$ with $p_k \geq 0$ and $\sum_k p_k = 1$ 
 % \begin{proof}
% From the concave behaviour of fidelity, $\sum_k p_k\mathcal{F}(\rho_k,\delta) \leq \mathcal{F}\big(\sum_k p_k \rho_k,\delta\big)$. By the definition of coherence quantified by fidelity \ref{cohrce_fdlty}, $ C_{\mathcal{F}}\big(\sum_kp_k \rho_k\big) \leq \sum_k p_k C(\rho_k)$ 
%  \end{proof}
\item[$(\mathcal{C}3)$] \textit{Strong monotonicity}: $C_\mathcal{F}(\rho) \geq \sum_k p_k C_\mathcal{F}(\rho_k)$, where $\rho_k = A_k \rho A^{\dagger}_k/ p_k$ \\ and $p_k = \text{tr} (A_k \rho A^{\dagger}_k)$.
 \begin{proof}
Stinespring dialation theorem \cite{Stinespring,Stinespring2}, is stated as follows.
   
 Let $T:\mathcal{S}(\mathcal{H})\rightarrow \mathcal{S}(\mathcal{H})$ \textit{be a CPTP map on finite dimensional Hilbert space $\mathcal{H}$. Then there exist another Hilbert space $\mathcal{K}$ and unitary operation $U$ on $\mathcal{H}\otimes \mathcal{K}$ such that }
   \begin{equation}
     T(\rho) = \text{tr}_{\mathcal{K}}[U (\rho \otimes \lvert 0 \rangle \langle 0 \rvert )U^{\dagger}]
   \end{equation}
   \textit{$\forall~\rho \in \mathcal{S}(\mathcal{H})$. Here $\mathcal{K}$ is chosen such that $ dim\, \mathcal{K} \leq dim \,2\mathcal{H} $ }. Let $ \Phi(\rho) = \sum_k A_k \rho A^\dagger_k$ be a CPTP map and we choose an ancillary state $ \lvert 0 \rangle \langle 0 \rvert \in \mathcal{K}$, the extended Hilbert space.
   Stinespring representation can be used for Kraus decompostion as
   \begin{equation}
     A_k \rho A^{\dagger}_k = \text{tr}_\mathcal{K} \bigg[\mathbb{I}\otimes\lvert k \rangle \langle k \rvert U(\rho \otimes\lvert 0 \rangle \langle 0 \rvert)U^{\dagger}\mathbb{I}\otimes\lvert k \rangle \langle k \rvert\bigg].
   \end{equation}
   Taking this representation to fidelity as\\
   $
     \sum_k \mathcal{F}\bigg(A_k\rho A^{\dagger}_k,A_k\sigma A^{\dagger}_k\bigg) = \sum_k \mathcal{F}\bigg(\text{tr}_\mathcal{K}\big[ \mathbb{I}\otimes\lvert k \rangle \langle k \rvert U(\rho \otimes\lvert 0 \rangle \langle 0 \rvert)U^{\dagger}\mathbb{I}\otimes\lvert k \rangle \langle k \rvert \big],\\ \hspace*{5cm}\text{tr}_\mathcal{K}\big[\mathbb{I}\otimes\lvert k \rangle \langle k \rvert U(\sigma \otimes\lvert 0 \rangle \langle 0 \rvert)U^{\dagger}\mathbb{I}\otimes\lvert k \rangle \langle k \rvert \big]\bigg).
   $
   
   Since fidelity obeys monotonicity under CPTP maps, the above equation is written as \\
   $
   \sum_k \mathcal{F}\bigg(A_k\rho A^{\dagger}_k,A_k\sigma A^{\dagger}_k\bigg) \geq \sum_k \mathcal{F}\bigg( \mathbb{I}\otimes\lvert k \rangle \langle k \rvert U(\rho \otimes\lvert 0 \rangle \langle 0 \rvert)U^{\dagger}\mathbb{I}\otimes\lvert k \rangle \langle k \rvert ,\\ \hspace*{5cm}\mathbb{I}\otimes\lvert k \rangle \langle k \rvert U(\sigma \otimes\lvert 0 \rangle \langle 0 \rvert)U^{\dagger}\mathbb{I}\otimes\lvert k \rangle \langle k \rvert \bigg). \\
   $ Making use of the property $ (\mathcal{F}6)$, we get
   \begin{align*}
  \sum_k \mathcal{F}\bigg(A_k\rho A^{\dagger}_k,A_k\sigma A^{\dagger}_k\bigg) & \geq \mathcal{F}(U \rho\,\otimes\rvert 0\rangle \langle 0\rvert U^{\dagger},U \sigma \,\otimes\rvert 0\rangle \langle 0\rvert U^{\dagger}) \\
   &\geq \mathcal{F}(\rho,\sigma),
   \end{align*}
  \\ implying that $C_\mathcal{F}(\rho) \geq \sum_k p_k C(\rho_k)$.
\end{proof}
  \end{enumerate} 
 It should be noted that unlike Ulmann fidelity, %\cite{fid_coh}, 
 the fidelity in eq.(\ref{fidlty}) is not a concave function \cite{fid2}. Hence coherence defined in terms of fidelity does not obey the convexity making it only a coherence monotone, and not a bonafide coherence measure \cite{coh_monotone}.
%%%%%%%%%%%%%%%%%%%%%%%%%%%%%%%%%%%%%%%%%%%%%%%%%%%%%%%%%%%%%%%%%%%%%%%%%%%%%%%%%%%%%%%%%%%%%%%%%%%%%%%%%%%%%%%%%%%%%%%%%%%%%%%%%%%%%%%%%%%%%%%%%%%%%%%%%%%%%%%%%%%%%%%%%%%%%%%%%%%%%%%%%%%%%%%%%%%%%%%%%%%%%%%%%%%%%%%%  
 %\section{Relation with purity}
 %In this section we establish connection of purity with coherence monotones and information measures.
 \begin{theorem}
    Maximal coherence $ C_m(\rho) = \,^{\text{sup}}_U C_\mathcal{F}(U\rho U^\dagger)$ is the upper bound for the coherence based on fidelity and is given by
    \begin{equation}
      C_m(\rho) = 1- d^{-\mathcal{P}_\mathcal{F}} \geq C_\mathcal{F}(\rho).
    \end{equation}
  \end{theorem}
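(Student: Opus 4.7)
My plan is to reduce the theorem to a two-stage optimization: first maximize $\mathcal{F}(\rho,\delta)$ over incoherent $\delta$ at fixed $\rho$ to obtain a closed form for $C_\mathcal{F}(\rho)$, and then maximize the resulting expression over all unitary conjugates of $\rho$. Both stages turn out to be Cauchy--Schwarz problems once the right coordinates are chosen.

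First I would parametrize a generic incoherent state as $\delta=\sum_i\delta_i|i\rangle\langle i|$ with $\delta_i\ge 0$, $\sum_i\delta_i=1$, and set $r_i=\langle i|\rho|i\rangle$. Substituting into (\ref{fidlty}) gives
\begin{equation*}
\mathcal{F}(\rho,\delta)=\frac{\bigl(\sum_i\delta_ir_i\bigr)^2}{\text{tr}(\rho^2)\sum_i\delta_i^2}.
\end{equation*}
The Cauchy--Schwarz inequality $\bigl(\sum_i\delta_ir_i\bigr)^2\le\bigl(\sum_i\delta_i^2\bigr)\bigl(\sum_ir_i^2\bigr)$ has equality iff $\delta_i\propto r_i$; since $\sum_i\delta_i=\sum_ir_i=1$, this fixes $\delta_i=r_i$, yielding $\max_{\delta\in\mathcal{I}}\mathcal{F}(\rho,\delta)=\sum_ir_i^2/\text{tr}(\rho^2)$, so that $C_\mathcal{F}(\rho)=1-\sum_ir_i^2/\text{tr}(\rho^2)$.

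Next, since $\text{tr}[(U\rho U^\dagger)^2]=\text{tr}(\rho^2)$, taking the supremum over $U$ amounts to minimizing $\sum_i(r_i')^2$ over all achievable diagonals $r_i'=\langle i|U\rho U^\dagger|i\rangle$. Cauchy--Schwarz against the all-ones vector gives $\sum_i(r_i')^2\ge 1/d$, with equality iff $r_i'=1/d$ for every $i$. Substituting this minimum back into the formula for $C_\mathcal{F}$ yields
\begin{equation*}
C_m(\rho)=1-\frac{1}{d\,\text{tr}(\rho^2)}=1-\mathcal{F}(\rho,\mathbb{I}/d)=1-d^{-\mathcal{P}_\mathcal{F}(\rho)},
\end{equation*}
where the last equality is (\ref{purity}). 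The companion inequality $C_m(\rho)\ge C_\mathcal{F}(\rho)$ is immediate by choosing $U=\mathbb{I}$ in the supremum.

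The main obstacle is the achievability step: the bound $\sum_i(r_i')^2\ge 1/d$ is elementary, but I must verify that the uniform diagonal is genuinely realised by some unitary, not merely approached in the supremum. For this I would invoke the Schur--Horn theorem, which states that a real vector occurs as the diagonal of a unitary conjugate of $\rho$ if and only if it is majorized by the spectrum of $\rho$; the uniform vector $(1/d,\ldots,1/d)$ is majorized by every probability distribution, hence by the eigenvalues of $\rho$, so an appropriate $U$ exists and $C_m$ is attained. If a constructive argument is preferred, one may build such a $U$ explicitly from the spectral decomposition of $\rho$ composed with the discrete Fourier transform on the eigenbasis.
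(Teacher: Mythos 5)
Your proof is correct, and it takes a more complete route than the paper's. The paper's argument simply substitutes the particular incoherent state $\delta=\mathbb{I}/d$ into the definition and invokes unitary invariance of $\mathcal{F}$; strictly speaking, since $C_\mathcal{F}$ involves a \emph{maximum} of $\mathcal{F}(\rho,\delta)$ over all $\delta\in\mathcal{I}$, that substitution only yields the one-sided bound $C_\mathcal{F}(U\rho U^\dagger)\le 1-\mathcal{F}(\rho,\mathbb{I}/d)$ for every $U$, and the paper does not verify that the bound is saturated by some unitary (it defers to a citation). You supply exactly the two missing ingredients: first, a closed form $C_\mathcal{F}(\rho)=1-\sum_i r_i^2/\text{tr}(\rho^2)$ obtained by a Cauchy--Schwarz optimization over the incoherent states (valid because the diagonal entries $r_i=\langle i|\rho|i\rangle$ are non-negative and sum to one, so $\delta_i=r_i$ is an admissible optimizer); and second, the achievability of the uniform diagonal via Schur--Horn, which shows the supremum over $U$ is actually attained and equals $1-1/(d\,\text{tr}\rho^2)=1-\mathcal{F}(\rho,\mathbb{I}/d)=1-d^{-\mathcal{P}_\mathcal{F}}$. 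The final observation that $C_m(\rho)\ge C_\mathcal{F}(\rho)$ follows from $U=\mathbb{I}$ matches the intended reading of the theorem. In short, your argument proves the same identity but closes a genuine gap in the paper's own justification; the paper's approach is shorter but implicitly assumes that the optimal incoherent approximant of the worst-case unitary orbit element is the maximally mixed state.
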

  \begin{proof}
  %Maximal coherence is defined as supremum over the unitary $U$,
  %\begin{equation}
  %  C_m(\rho) = \,^{\text{sup}}_U C_\mathcal{F}(U\rho U^\dagger).
  %\end{equation}
  Here we have $\mathbb{I}/d = \delta \in \mathcal{I}$, where maximally mixed state can be considered as a maximally incoherent state. This makes $\delta$ as unitary invariant such that $ \delta = U \delta U^\dagger$. Hence the maximal coherence can be written as \cite{purity}
  \begin{equation}
   C_m(\rho) =  \,^{\text{sup}}_U \left(1-\mathcal{F}(U \rho U^\dagger,U (\mathbb{I}/d) U^\dagger)\right).
  \end{equation}
By the property of fidelity, which is unitary invariant, we have
\begin{equation}
   C_m(\rho) =  1-\mathcal{F}(\rho ,\mathbb{I}/{d} ) =  1-d^{-\mathcal{P}_\mathcal{F}}.
  \end{equation} 
    \end{proof} 
  Alternalively
  \begin{equation*}
    \mathcal{P}_\mathcal{F}(\rho) = \log_d(1- C_m(\rho))^{-1}
  \end{equation*}
which implies that the purity measure has one to one relation with maximal coherence, i.e. the upper bound of coherence.
%%%%%%%%%%%%%%%%%%%%%%%%%%%%%%%%%%%%%%%%%%%%%%%%%%%%%%%%%%%%%%%%%%%%%%%%%%%%%%%%%%%%%%%%%%%%%%%%%%%%%%%%%%%%%
%%%%%%%%%%%%%%%%%%%%%%%%%%%%%%%%%%%%%%%%%%%%%%%%%%%%%%%%%%%%%%%%%%%%%%%%%%%%%%%%%%%%%%%%%%%%%%%%%%%%%%%%%%%%%
%\section{Measurement}
In what follows, we analyze the role of von Neumann measurement on purity.
For a state $\rho$, the post-measurement state on subsystem can be written as $\Pi(\rho) = \sum_k(\Pi_k\otimes \mathbb{I})\rho (\Pi_k \otimes \mathbb{I})$, where $  \{\Pi_k = \lvert k \rangle \langle k \rangle\}$ is a set von Neumann projective measurements on subsystem $a$. If a state $\rho$ is a coherent pure state, then the projective measurement on the state makes the resultant state an incoherent mixed one. In other words, for any pure state, we have
\begin{equation}
  \mathcal{P}_\mathcal{F}(\rho) \geq   \mathcal{P}_\mathcal{F}\left(\Pi(\rho)\right)
\end{equation}
with equality holds when $\rho$ is an incoherent state. This implies that von Neuman measurement reduces the purity with equality holds only if $\rho$ is an incoherent state. Here $ \Pi(\rho)$ need not be a maximally mixed state, rather it is a diagonal state. Since
\begin{equation}
  \mathcal{F}(\rho,\mathbb{I}/d) \leq \mathcal{F}\left(\rho,\Pi(\rho)\right)
\end{equation}
we have
\begin{align*}
  1- \mathcal{F}\left(\rho,\mathbb{I}/{d}\right) &\geq 1- \,^{\text{min}}_{\Pi^a}\mathcal{F}(\rho,\Pi(\rho))
  \end{align*}
  and hence
  \begin{align*}
  C_m(\rho)                   &\geq  N_\mathcal{F}(\rho)
\end{align*}
where $  N_\mathcal{F}(\rho)$ is the measurement induced nonlocality based on fidelity (F-MIN) \cite{fmin}.
Alternalively we have
 %and $ C_m(\rho)$ is the maximal coherence. In terms of logarithamic fidelity we thus have
\begin{equation}
\mathcal{P}_\mathcal{F}(\rho) \geq \log_d \left(1- N_\mathcal{F}(\rho) \right)^{-1}.
%  1- d^{-\mathcal{P}_\mathcal{F}} \geq N_\mathcal{F}(\rho)
\end{equation}
%%%%%%%%%%%%%%%%%%%%%%%%%%%%%%%%%%%%%%%%%%%%%%%%%%%%%%%%%%%%%%%%%%%%%%%%%%%%%%%%%%%%%%%%%%%%%%%%%%%%%%%%%%%%%%%%%%%%%%%%%%%%%%%%%%%%%%%%%%%%%%%%%%%%%%%%%%%%%%%%%%%%%%%%%%%%%%%%%%%%%%%%%%%%%%%%%%%%%%%%%%%%%%%%%%%%%%%%%%
\section{Quantum correlation and purity}
  Consider a bipartite state $ \rho$ shared by parties $a$ and $b$ in $ \mathcal{H}^a$ and $ \mathcal{H}^b$ respectively. Coherence based on fidelity relative to measurement is given by
  \begin{equation}
    C_{\mathcal{F}}(\rho|\Pi) = 1- \mathcal{F}(\rho,\Pi(\rho))
  \end{equation}
Using this, we define the difference between the coherence of global state $\rho$ and coherence of the product states as
  \begin{equation}
    \Delta_\mathcal{F}(\rho|\Pi) = C_\mathcal{F}(\rho|\Pi) - C_\mathcal{F}(\rho^a \otimes \rho^b|\Pi)
  \end{equation}
 Using the properties of fidelity, the coherence of product state can be written as 
 \begin{align*}
   C_\mathcal{F}(\rho^a\otimes\rho^b|\Pi) &= 1- \mathcal{F}(\rho^a\otimes\rho^b,\Pi(\rho^a\otimes\rho^b)) \\
                                          &= 1- \mathcal{F}\left(\rho^a\otimes\rho^b, (\Pi^a\otimes\mathbb{I})(\rho^a\otimes\rho^b)(\Pi^a\otimes\mathbb{I})\right) \\
                                          &= 1- \mathcal{F}(\rho^a,\Pi^a(\rho^a))\cdot\mathcal{F}(\rho^b,\rho^b)\\
                                          &= C_\mathcal{F}(\rho^a|\Pi^a).
 \end{align*} 
Quantum correlation based on the difference in coherence is defined as
  \begin{equation}
    \mathcal{Q}_\mathcal{F}(\rho) =\, ^{\text{min}}_\Pi\,\Delta_\mathcal{F}(\rho|\Pi)
  \end{equation}
  where the minimization is taken over all von Neumann projective measurements. This quantity has the following properties.
 \begin{enumerate}
   \item [($\mathcal{Q}1$)] $\mathcal{Q}_\mathcal{F}(\rho) \geq 0$, for any bipartite state $ \rho$, with equality holds for the product state of the form $ \rho = \sum_k p_k \lvert k \rangle \langle k \rvert \otimes \rho_k$.
   \begin{proof}
   The bipartite state of the form $\rho = \sum_k p_k \lvert k \rangle \langle k \rvert \otimes \rho_k$ is invariant under von Neumann projective measurement such that $ \rho = \Pi(\rho)$. Also the subsystem $ \rho^a = \sum_k p_k \lvert k \rangle \langle k \rvert$ is an incoherent state such that $ C_\mathcal{F}(\rho|\Pi) =C_\mathcal{F}(\rho^a|\Pi^a) = 0 $. By the property of fidelity, the positivity condition holds for $ \mathcal{Q}_\mathcal{F}(\rho)$.
   \end{proof}
   \item [($\mathcal{Q}2$)]$\mathcal{Q}_\mathcal{F}(\rho)$ in invariant under local unitary transformation such that  for any unitary operator $ U $, $ \mathcal{Q}_\mathcal{F}(\rho) = \mathcal{Q}_\mathcal{F}(U \rho U^\dagger)$.
   
   This can be proved by realizing that fidelity in invariant under unitary operations.
%   \begin{proof}
%   \begin{align*}
%     \mathcal{Q}_\mathcal{F}(U\rho U^\dagger) &= C_\mathcal{F}(U \rho U^\dagger|\Pi) - C_\mathcal{F}(U\rho^aU^\dagger|\Pi^a)\\
%                                   &= \mathcal{F}(U\rho^aU^\dagger,\Pi^a(U\rho^a U^\dagger)) - \mathcal{F}(U \rho U^\dagger,\Pi^a(U\rho U^\dagger)).
%   \end{align*}
%   By its property, fidelity is invariant under unitary transformation, hence the above equation can be modified as
%   \begin{align*}
%     \mathcal{Q}_\mathcal{F}( U \rho U^\dagger) &= \mathcal{F}(\rho^a,\Pi^a(\rho^a)) - \mathcal{F}(\rho,\Pi^a(\rho)) \\
 %    &= \mathcal{Q}_\mathcal{F}(\rho).
%   \end{align*}
%   \end{proof}
   \item[($\mathcal{Q}3$)] Quantum correlation is a non-increasing function under any quantum operation $ \varepsilon(\rho)$ on the state such that  $ \mathcal{Q}_\mathcal{F}(\varepsilon(\rho)) \leq \mathcal{Q}_\mathcal{F}(\rho)$.
   \begin{proof}
   \begin{align*}
     \mathcal{Q}_\mathcal{F}(\varepsilon(\rho)) &= C_\mathcal{F}[\varepsilon(\rho)|\Pi]- C_\mathcal{F}[\varepsilon(\rho^a)|\Pi^a].
   \end{align*}
   Since the coherence is a function of fidelity, it is a contractive function such that $ \mathcal{F}\left(\varepsilon(\rho),\varepsilon(\sigma)\right) \leq \mathcal{F}(\rho,\sigma)$ and hence $C_\mathcal{F}\left(\varepsilon(\rho)|\Pi\right) \leq C_\mathcal{F}(\rho|\Pi)$. Implying
   \begin{equation*}
    \mathcal{Q}_\mathcal{F}(\varepsilon(\rho)) \leq \mathcal{Q}_\mathcal{F}(\rho).
   \end{equation*}
   \end{proof}
 \end{enumerate}
 If $ \Pi(\rho) = \mathbb{I}/d$, the maximally mixed state, then
 \begin{equation}
    \Delta_\mathcal{F}(\rho|\mathbb{I}/d) = d_a^{-\mathcal{P}_\mathcal{F}(\rho^a)} - d^{-\mathcal{P} _\mathcal{F}(\rho)} \geq 0
 \end{equation}
 where $ d_a$ is the dimension of the reduced state $\rho^a$.
%%%%%%%%%%%%%%%%%%%%%%%%%%%%%%%%%%%%%%%%%%%%%%%%%%%%%%%%%%%%%%%%%%%%%%%%%%%%%%%%%%%%%%%%%%%%%%%%%%%%%%%%%%%%%
%%%%%%%%%%%%%%%%%%%%%%%%%%%%%%%%%%%%%%%%%%%%%%%%%%%%%%%%%%%%%%%%%%%%%%%%%%%%%%%%%%%%%%%%%%%%%%%%%%%%%%%%%%%%%
\section{Weak Measurement}
In what follows, we analyze the role of weak measurement on purity. A quantum measurement with any number of outcome can be performed as sequence of measurement with two outcomes. It can be shown that von Neuman projective measurement can be implemented as sequence of weak measurement. Such weak measurement operators are defined as \cite{weak2} 
\begin{equation}
  \Omega_x = t_1 \Pi^1 + t_2 \Pi^2,~~ \Omega_{-x} = t_1 \Pi^2 + t_2 \Pi^1
\end{equation}
where $ t_{1,2} = \sqrt{\frac{1\pm \tanh x}{2}}$ with $ x \in \mathbb{R}$ being the strength of measurement. The orthonormal projectots $ \Pi^1$ and $ \Pi^2$ satisfy the condition $ \Pi^1 + \Pi^2 = \mathbb{I}$ and they can be decompossed as $ \Pi^1 = \sum^k_{i=1} \Pi_i$, $ \Pi^2 = \sum^n_{i=k+1} \Pi_i$, where $\{\Pi_i\} $ is a set of von Neumann measurements \cite{weak_corltn}. Weak operators reduces to projective measurements in the limit $ x \rightarrow \infty$ and satisfy the relation $ \sum_{j =\pm x} \Omega^\dagger_j \Omega_{j} = \mathbb{I}$.
Considering a state $\rho$, the state after weak the measurement is \cite{hmin}
\begin{align}\nonumber
  \Omega(\rho) &= \sum_{j=\pm x} (\Omega_j \otimes \mathbb{I})\rho(\Omega_j \otimes \mathbb{I})\\
               &= t \rho + (1-t) \Pi(\rho) \label{weak1}
\end{align}
where $t = 2 t_1t_2 = \sech\,x$. From the above equation  it is clear that tr $[\Omega(\rho)] = 1$. This relation is useful to interpret $ \Omega(\rho)$ as convex comination of a state and its post measured state with weights of $t$ and $(1-t)$ respectively. 
%%%%%%%%%%%%%%%%%%%%%%%%%%%%%%%%%%%%%%%%%%%%%%%%%%%%%%%%%%%%%%%%%%%%%%%%%%%%%%%%%%%%%%%%%%%%%%%%%%%%%%%%%%%%%
%%%%%%%%%%%%%%%%%%%%%%%%%%%%%%%%%%%%%%%%%%%
Here, we intend to find the closeness of a quantum state with its weakly measured state $ \Omega(\rho)$, using fidelity as
\begin{equation}
  \mathcal{F}\big(\rho,\Omega(\rho)\big) = \frac{\big(\text{tr}(\rho\,\Omega(\rho))\big)^2}{\text{tr}(\rho^2)\text{tr}\big(\Omega^2(\rho)\big)}.
\end{equation}
Combining eq.(\ref{weak1}), along with the identity $ \text{tr}[\Pi(\rho)^2] = \text{tr}[\rho \Pi(\rho)]$, we have \cite{hmin},
\begin{equation*}
  \text{tr}[\Omega^2(\rho)] = \text{tr}[t^2\rho^2 + (1-t^2)\rho\Pi(\rho)].
\end{equation*}
The above equation can also be rewritten as 
\begin{equation*}
  \text{tr}[\Omega^2(\rho)] = \text{tr}[\rho(t\,\Omega(\rho)+(1-t)\Pi(\rho))].
\end{equation*}
Making use of these results, the fidelity can be written as 
\begin{align} \nonumber
  \mathcal{F}(\rho,\Omega(\rho)) &= \frac{\left(\text{tr}[\rho\,\Omega(\rho)]\right)^2}{\text{tr}[\rho^2]\cdot\text{tr}[t\,\rho\,\Omega(\rho) + (1-t)\rho\Pi(\rho)]} \\
                                &= \frac{\left(\text{tr}[\rho\,\Omega(\rho)]\right)}{\text{tr}[\rho^2]\cdot \left(t + (1-t) \zeta \right)}\label{weak_fid}
\end{align}
where $ \zeta =\text{tr}[\rho\,\Pi(\rho)]/{\text{tr}[\rho\,\Omega(\rho)]}$. The fidelity defined above is function of strength of measurement such that 
\begin{equation*}
 \lim_{x \rightarrow \infty} \mathcal{F}(\rho,\Omega(\rho)) = \frac{\text{tr}[\rho\,\Pi(\rho)]}{\text{tr}[\rho^2]},
\end{equation*}
which is the minimum value of the fidelity. Further, $\lim_{x \rightarrow 0} \mathcal{F}(\rho,\Omega(\rho)) = 1$ and thus we have
\begin{equation*}
 \frac{\text{tr}[\rho\,\Pi(\rho)]}{\text{tr}[\rho^2]} \leq \mathcal{F}(\rho,\Omega(\rho)) \leq 1.
\end{equation*}
%%%%%%%%%%%%%%%%%%%%%%%%%%%%%%%%%%%%%%%
%\subsection{\textbf{Purity of weakly measured state}}
The purity of the state $ \Omega(\rho)$ is 
\begin{align} \nonumber
 \mathcal{P}_\mathcal{F}(\Omega(\rho)) &=-\log_d \mathcal{F}\left(\Omega(\rho), \mathbb{I}/{d}\right)\\ \nonumber
                                      &= \log_d \left( d\,\text{tr}[\Omega^2(\rho)]\right)\\
                                      &= \log_d \left(d\,\text{tr}[t^2\rho^2 + (1-t^2)\rho\Pi(\rho)]\right).
\end{align}
With this we have the following limiting case:
\begin{equation}
\lim_{x \rightarrow \mu } \mathcal{P}_\mathcal{F}(\Omega(\rho)) = \begin{cases}
\mathcal{P}_\mathcal{F}(\rho)   & \mu = 0 \\
\log_d \left(d\,\text{tr}[\rho\Pi(\rho)]\right) &  \mu = \infty.
\end{cases}
\end{equation}
%$ \mathcal{P}_\mathcal{F}(\Omega(\rho))$ is the quantity which depends on measurement on the state. Such that
%\begin{equation*}
%  \lim_{x \rightarrow 0 } \mathcal{P}_\mathcal{F}(\Omega_\rho) = \mathcal{P}_\mathcal{F}(\rho). 
%\end{equation*}
%In converse, where the strength of measurement in maximum,
%\begin{equation}
%  \lim_{x \rightarrow \infty}\mathcal{P}_\mathcal{F}(\Omega_\rho) = \log_d d\,\text{tr}[\Pi(\rho)^2]).
%\end{equation}
%For a incoherent state where $ \Pi(\rho) = \rho$, then
%\begin{equation*}
 % \mathcal{P}_\mathcal{F}(\Omega) = \mathcal{P}_\mathcal{F}(\rho).
%\end{equation*}
%%%%%%%%%%%%%%%%%%%%%%%%%%%%%%%%%%%%%%%%%%%%%%%%%%%%%%%%%%%%%%%%%%%%%%%%%%%%%%%%%%%%%%%%%%%%%%%%%%%%%%%%%%%%%
%%%%%%%%%%%%%%%%%%%%%%%%%%%%%%%%%%%%%%%%%%%%%%%%%%%%%%%%%%%%%%%%%%%%%%%%%%%%%%%%%%%%%%%%%%%%%%%%%%%%%%%%%%%%%
 \section{Conclusions}  
 In this paper, we propose a fidelity based purity measure of an arbitrary quantum state ranging between 0 to 1. It is also one-to-one with linear purity. In addition, we propose a valid fidelity based coherence monotone, whose maximum is shown to be one-to-one with fidelity-purity. This establishes the connection between purity and coherence of quantum state. Finally, the role of weak measurement on fidelity-purity is also discussed.
%%%%%%%%%%%%%%%%%%%%%%%%%%%%%%%%%%%%%%%%%%%%%%%%%%%%%%%%%%%%%%%%%%%%%%%%%%%%%%%%%%%%%%%%%%%%%%%%%%%%%%%%%%%%%%%%%%%%%%%%%%%%%%%%%%%%%%%%%%%%%%%%%%%%%%%%%%%%%%%%%%%%%%%%%%%%%%%%%%%%%%%%%%%%%%%%%%%%%%%%%%%%%%%%%%%%%%%%
\bibliographystyle{99}  

\end{document}